\definecolor{honestColor}{cmyk}{0.96,0,0,0}
\definecolor{adversarialColor}{rgb}{1.0, 0.13, 0.32}
  \providecommand\BibTeX{{%
    \normalfont B\kern-0.5em{\scshape i\kern-0.25em b}\kern-0.8em\TeX}
  }
\newtheorem{theorem}{Theorem}
\newtheorem{remark}{Remark}
\newtheorem{prop}[theorem]{Proposition}
\newtheorem{example}{\textbf{Example}}
\title{Condorcet Attack Against Fair Transaction Ordering}
\date{}
\begin{document}

\author[1]{Mohammad Amin Vafadar\thanks{vafadar@ualberta.ca}}
\author[1]{Majid Khabbazian\thanks{mkhabbazian@ualberta.ca}}
\affil[1]{Department of Electrical and Computer Engineering, University of Alberta}

\maketitle

\begin{abstract}
    We introduce the \emph{Condorcet attack}, a new threat to fair transaction ordering.
    Specifically, the attack undermines batch-order-fairness, the strongest notion of transaction fair ordering proposed to date.
    The batch-order-fairness guarantees that a transaction $\texttt{tx}$ is ordered before $\texttt{tx}'$ if a majority of nodes in the system receive  $\texttt{tx}$ before $\texttt{tx}'$;
    the only exception (due to an impossibility result) is when $\texttt{tx}$ and $\texttt{tx}'$ fall into a so-called ``Condorcet cycle''.
    When this happens, $\texttt{tx}$ and $\texttt{tx}'$ along with other transactions within the cycle are placed in a batch, and any unfairness inside a batch is ignored. 
    
    In the Condorcet attack, an adversary attempts to undermine the system's fairness by imposing Condorcet cycles to the system.
    In this work, we show that the adversary can indeed impose a Condorcet cycle by submitting as few as two otherwise legitimate transactions to the system.    
    Remarkably, the adversary (e.g., a malicious client) can achieve this even when all the nodes in the system behave honestly.      
    A notable feature of the attack is that it is capable of ``trapping'' transactions that do not naturally fall inside a cycle, i.e. those that are transmitted at significantly different times (with respect to the network latency).
    To mitigate the attack, we propose three methods based on three different complementary approaches.  
    We show the effectiveness of the proposed mitigation methods through simulations, and explain their limitations.
\end{abstract}

\section{Introduction}

    The first blockchain application, Bitcoin, emerged in the midst of the financial crisis of 2008, caused in part by the excessive trust placed in centralized institutions.
    Blockchain technology changed this.
    In blockchain, there is no central authority or intermediary controlling the entire system.
    Instead, transactions are validated and included through a consensus mechanism among the participating parties.
    Decentralization also promotes transparency and reduces the possibility of fraud or corruption since all transactions are publicly recorded and visible to all participants on the network.    

    Despite the decentralized nature of blockchain systems, the ordering of transactions is carried out in a centralized manner; the miner/validator who creates a block determines the ordering of transactions within the block.
    This gives too much power to a single entity as the success and profitability of a transaction can be determined by the order in which the transaction appears in a block~\cite{FlashBoys, BaumCDFG21, EskandariMC19, abs-2203-11520, DarkForest}.
    For instance, when a Non-Fungible Token (NFT) is dropped in a given block, transactions positioned earlier in the block have a higher chance of acquiring the NFT compared to those placed later.
    
    To address this issue, several existing works~\cite{Aequitas, Pompe, Themis, QuickFairness, asiapkc/KelkarDK22, Wendy} proposed decentralized methods for handling transaction ordering, where instead of a single node, a committee of nodes collectively decide on the ordering of received transactions. 
    At the core of these methods, each node in the system reports a list of transactions in the order the node has received them.
    The system then generates and agrees on a ``fair'' ordering by taking the reported orderings into account. 

    Finding a fair ordering is not trivial.
    For instance, suppose that for any two transactions $\texttt{tx}_1$ and $\texttt{tx}_2$, we require $\texttt{tx}_1$ to be placed before $\texttt{tx}_2$ if a large majority of nodes in the system claim to have received $\texttt{tx}_1$ before $\texttt{tx}_2$.
    Despite being a primitive requirement, no method can provide a guarantee due to an impossibility result rooted in social choice theory \cite{SocialChoice}.
    As an example, consider a system consisting of three nodes, where each node has received three transactions:
    $\texttt{tx}_1$, $\texttt{tx}_2$, and $\texttt{tx}_3$.
    Suppose the nodes report the ordering as 
    $[\texttt{tx}_1, \texttt{tx}_2, \texttt{tx}_3]$, 
    $[\texttt{tx}_2, \texttt{tx}_3, \texttt{tx}_1]$,
    and 
    $[\texttt{tx}_3, \texttt{tx}_1, \texttt{tx}_2]$.
    In this case, $\texttt{tx}_1$ is reported to be before $\texttt{tx}_2$ by two nodes (i.e. the majority), $\texttt{tx}_2$ is reported to be before $\texttt{tx}_3$ by two nodes, and $\texttt{tx}_3$ is reported to be before $\texttt{tx}_1$ by two nodes. 
    This essentially creates a cycle, referred to as \emph{Condorcet cycle}~\cite{Condorcet}, which prevents any final ordering from respecting the views of the majority on how transactions should be ordered.

    The existing fair ordering methods adopt a relaxed approach to ordering transactions inside a Condorcet cycle.
    For instance,  Cachin et al. in Quick-Fairness~\cite{QuickFairness} do not mention any ordering mechanism for such transactions, and Kelkar et al. in Aequitas~\cite{Aequitas} suggest a simple alphabetical ordering. 
    This relaxed approach is, perhaps, due to two reasons: 1) it is not possible to guarantee fair ordering of transactions inside a cycle; 2) Condorcet cycles occur infrequently in practice, and when they do occur, they usually involve transactions that are received around the same time by the nodes in the system.
    Nevertheless, in this work, we show that Condorcet cycles deserve more attention as they can be created ``artificially'' by adversaries through what we refer to as the \emph{Condorcet attack}. 
    An interesting feature of the Condorcet attack proposed in this work is that it can be conducted by a client outside the system. In particular, the attack can be effectively executed even when all the nodes in the system are honest!

    As will be explained later, in the Condorcet attack, an adversarial client sends a small number of transactions to different nodes in the system.
    The adversary chooses the timing and order of these transactions to create a Condorcet cycle that traps many honest transactions in it.
    (a Condorcet cycle with only the adversary's transactions in it is all but harmless to the system.)
    This cycle has to be broken, by the leader in a leader-based method, in order to establish a total ordering.
    Even if the leader is honest, the act of breaking the cycle could change the order of honest transactions, which would have otherwise been appropriately ordered\footnote{Kelkar et al.~\cite{Themis} consider it a success for an adversary if the adversary places two transactions into the same cycle when they should not have been.}. 
    
    Defending the Condorcet attack is not straightforward.
    It is partly because it is challenging to differentiate between honest transactions and otherwise valid transactions that are submitted with the intention of creating a cycle. 
    It becomes notably more challenging to safeguard the system when, in addition to the adversarial client outside the system, the leader and possibly a fraction of the nodes in the system are adversarial.
    Nevertheless, in this work, we propose three mitigation techniques based on three different approaches. 
    The proposed techniques complement each other and can work together harmoniously to maximize resistance against the attack. 

    In summary, we make the following contributions.
    We introduce a framework for a new type of attack (Condorcet attack) against fair transaction ordering.
    We show that the attack can be highly successful in trapping honest transactions in a cycle.
    To mitigate the attack, we propose three techniques based on three different complimentary approaches, and show the effectiveness of the technique through simulations.

\section{Related Work}

    The classical approach to mandating fair transaction ordering is through \emph{secure causal ordering}, a method introduced by Birman and Reiter in 1994~\cite{ReiterBirman}, and later improved by Cachin et al. in 2001~\cite{CachinKPS01}.
    This method uses encryption to conceal the content of transactions during the ordering process, and allows decryption of transactions only after the order of transactions is finalized. 
    This prevents an adversary from observing the content of transactions during the ordering process, thereby effectively eliminating attacks such as the sandwich attack~\cite{DarkForest} that rely on inspecting transaction contents.  
    However, the method is unable to prevent ``blind front-running attacks'' where, for instance, the adversary's sole objective is to order her transaction first (to, for example, get priority in purchasing a token). 
    In addition, the method cannot prevent attacks based on transactions' metadata, as metadata (such as the source of a transaction) is not encrypted.

    The second approach to mandating fair transaction ordering involves a first-come, first-served strategy. 
    This approach is complementary to the first approach and has been the focus of several recent studies.
    The existing methods that follow this strategy can be broadly classified into two categories: timestamp-based methods and batch-based methods. 
    Timestamp-based methods are computationally inexpensive but require synchronized clocks.
    Batch-based methods, on the other hand, offer stronger fairness than timestamp-based methods, but can tolerate fewer adversarial nodes.    
    
    \textbf{Timestamp-based Methods.}
    An example of a timestamp-based protocol is Pompe~\cite{Pompe} due to Zhang et al. Pompe introduces a notion of fairness called the \emph{ordering linearizability}.
    This notion stipulates that if the highest timestamp of a transaction \texttt{tx} is less than the lowest timestamp of a transaction $\texttt{tx}'$ among honest nodes, then \texttt{tx} must be ordered before $\texttt{tx}'$ in the final order of transactions.  
    Although can enforce ordering linearizability, Pompe suffers from censorship issues, as noted in~\cite{Themis}.
    
    Kursawe's Wendy protocol~\cite{Wendy} is another timestamp-based protocol that defines a notion of fairness called \emph{timed-relative-fairness}.
    This notion requires that if all honest nodes received a transaction \texttt{tx} before time $\tau$, and transaction $\texttt{tx}'$ after $\tau$, then \texttt{tx} must be ordered before $\texttt{tx}'$.

    \textbf{Batch-based Methods.}
    Aequitas~\cite{Aequitas} by Kelkar et al. is a batch-based method proposed for fair transaction ordering.
    Aequitas enforces a fairness notion known as the \emph{$\gamma$-batch-order-fairness}.
    The notion requires that if two transactions \texttt{tx} and $\texttt{tx}'$ are received by all nodes in a system with $n$ nodes, and $\gamma n$ nodes received \texttt{tx} before $\texttt{tx}'$, then all honest nodes must output \texttt{tx} no later than $\texttt{tx}'$. 
    Aequitas suffers from high communication complexity of $\mathcal{O}(n^3)$, and can guarantee only a weak notion of liveness, one of the two pillars of consensus security.
    
    The second batch-based method is Quick-Fairness~\cite{QuickFairness} proposed by Cachin et al. 
    This method enforces a fairness notion called the \emph{$\kappa$-differential order-fairness}.
    This notion mandates that if the number of nodes that have received transaction \texttt{tx} before $\texttt{tx}'$ exceeds $\kappa + 2f$ for some $\kappa \geq 0$, then \texttt{tx} should be ordered no later than $\texttt{tx}'$, where $f$ is the maximum number of adversarial nodes in the system. 
    Kelkar et al.~\cite{Themis} show that this notion of fairness is indeed a re-parameterized version of the $\gamma$-batch-order-fairness notion.
    They also demonstrate that the Quick-Fairness protocol satisfies fairness only when all nodes are honest.

    Kelkar et al. addressed the shortcomings of Aequitas in their protocol called Themis~\cite{Themis}.
    Themis satisfies the $\gamma$-batch-order-fairness notion, and solves the liveness problem of Aequitas. 
    Moreover, it offers a communication complexity of $\mathcal{O}(n^2)$ instead of $\mathcal{O}(n^3)$ offered by Aequitas.
    In addition, it satisfies a more generalized notion of fairness than the one used in Quick-Fairness and a stronger notion of fairness than those used in the existing time-based methods. For these reasons, in our work, we focus on Themis as the state-of-the-art fair transaction ordering method.

\section{Model}
  
\textbf{System.}
    We consider a permissioned system with a committee of $n$ nodes.
    The nodes receive transactions directly from clients, and submit the list of their received transactions together with the order in which the transactions were received to a special node called the leader. The leader collects the lists of transactions from the nodes, and proposes a final ordering using a pre-decided fair-ordering protocol.
    The leader in the system is not fixed, and can change through a pre-determined protocol.

\medskip
\noindent
\textbf{Fair Ordering.}
    We adopt the batch-order-fairness from~\cite{Aequitas, Themis}, the strongest notion of fair ordering proposed to date.
    For a parameter $\frac{1}{2} < \gamma \leq 1$, the batch-order-fairness specifies that if a fraction $\gamma$ of nodes receive a transaction $\texttt{tx}$ before receiving another transaction $\texttt{tx}'$, then $\texttt{tx}$ must be placed in the order before $\texttt{tx}'$, with exceptions allowed only if $\texttt{tx}$ and $\texttt{tx}'$ are within a same Condorcet cycle (Condorcet cycles are defined in Section~\ref{sec:pre}). 
    Transactions within a cycle are placed in a batch, and are ordered by a method that we refer to as \emph{batch-ordering scheme}. 
    The existing fair ordering protocols either do not specify a batch-ordering scheme or propose a simple one (e.g., an alphabetical-based scheme~\cite{Aequitas}).

\medskip
\noindent
\textbf{Network.}
    The network utilizes public key infrastructure and secure digital signatures for communications. 
    As in~\cite{Aequitas}, we consider two networks: the (standard) internal network (for communication amongst nodes in the system) and the external network (for clients to transmit their transactions to the system). 

    We assume that the network operates under partial synchrony \cite{PartialSynchrony}, meaning that there is a network delay $\Delta$ (not known to the nodes) that limits the amount of time it takes for messages to be delivered between nodes.

\medskip
\noindent
\textbf{Adversary.}    
    We consider an adversary who has control over $f \geq 0$ out of $n$ nodes, and also possesses at least one client capable of submitting transactions to the system.
    The adversary can deviate arbitrarily from the protocol.
    The adversary does not have control over the external network, but may have full control over the internal network, hence can delay and reorder messages up to the bound $\Delta$.

\section{Preliminaries}   
\label{sec:pre}
    
\noindent
\textbf{Graph Terminology.}
    We use $G = (V, E)$ to denote a graph with the set of vertices $V$ and the set of edges $E$. 
    In this work, each vertex represents a transaction, therefore, we use the terms vertices and transactions interchangeably. 
    Unless otherwise specified, we use an unweighted and directed graph.
    In the case of a weighted graph, the weight or cost associated with the edge $(u, v) \in E$ is represented by $w(u, v)$.

    A \textit{tournament graph} is a directed graph where every pair of distinct vertices is connected by a directed edge in either of two possible directions.
    A \textit{Strongly Connected Component (SCC)} in a graph is a maximal subgraph in which there is a path from every vertex to every other vertex.
    A \textit{condensation graph} is obtained from the original graph by combining its SCCs into a single vertex.
    A \textit{Directed Acyclic Graph (DAG)} is a directed graph that contains no cycles, meaning it is possible to move from one vertex to another along the directed edges, but it is not possible to return to the original vertex by following a sequence of directed edges.
    A \textit{topological sort} is an ordering of the vertices in a DAG such that for every directed edge $(u, v)$, vertex $u$ appears before vertex $v$.
    In other words, if there is a directed edge from vertex $u$ to vertex $v$, then $u$ must appear before $v$ in the topological sort.
    A \textit{Hamiltonian Path} is a path in a graph that passes through every vertex exactly once.
    A \textit{Hamiltonian Cycle} is a cycle in a graph that passes through every vertex exactly once.

\medskip
\noindent
\textbf{Themis.}    
    Themis is the latest ordering method which achieves batch-order-fairness in the presence of an adversary who controls up to $f < \frac{(2\gamma - 1)n}{4}$ nodes out of $n$ nodes.
    Themis is a leader-based method and works in three phases, as described below.
    
\begin{itemize}
    \item Phase 1 (Fair Propose):
    The Fair Propose phase is the first phase of the algorithm, where each node proposes a set of transactions and their local orderings to the leader. 
    The leader then uses the local orderings of $n - f$ nodes to build a dependency graph.
    In the dependency graph, an edge from a vertex $v_1$ to $v_2$ indicates that the transaction $v_1$ should be placed before the transaction $v_2$. 
    From the dependency graph, the leader then computes the condensation graph and its topological sorting to output a fair ordering.
    
    \item Phase 2 (Fair Update):
    The Fair Update phase is the second phase of the algorithm, where nodes update the ordering for previous proposals. 
    This is necessary as new transactions may depend on previously proposed transactions, and these dependencies need to be accounted for in the ordering.
    The Fair Update algorithm takes the local transaction orderings of $n - f$ nodes for previously proposed shaded transactions as input and outputs the updated dependencies.
    
    \item Phase 3 (Fair Finalize):
    The Fair Finalize phase is the third and final phase of the algorithm, where a sequence of proposals is finalized into a final ordering.
    The Fair Finalize algorithm updates the graphs for each proposal and computes the condensation graphs and their topological sorting.
    It then retrieves the final transaction ordering for each proposal based on the Hamiltonian cycles of the vertices in the sorted condensation graphs.
\end{itemize}

\medskip
\noindent
\textbf{Condorcet Cycles.}  
    As mentioned above, Themis constructs a dependency graph, a directed graph where each vertex represents a transaction, and an edge from a vertex $v_1$ to $v_2$ indicates that the transaction corresponding to $v_1$ should be placed before the transaction corresponding to $v_2$. 
    We refer to any cycle in this dependency graph as a \emph{Condorcet cycle}.
    We note that cycles can occur in a dependency graph because of the Condorcet paradox~\cite{Themis}.

\section{Condorcet Attack}

    In this section, we present the framework of the Condorcet attack.
    The attack aims at trapping honest transactions (i.e., transactions submitted by honest clients) inside a Condorcet cycle. 
    If there is no effective batch-ordering scheme in place (e.g., if the batch-ordering scheme is alphabetical-based as suggested in~\cite{Aequitas}), this can change the ordering of the honest transactions even when all the nodes in the system are honest.
    
    An adversary can take different strategies to impose a Condorcet cycle. 
    For instance, suppose that the adversary controls $f$ nodes, including the leader, in the system. 
    The adversary then controls $f$ local orderings, and can manipulate these orderings in a way to create a cycle. 
    In the simulation section, we show that this strategy can not only create a cycle but also chain the cycles to involve more honest transactions. 
    Nevertheless, the length of these cycles is typically small and the chain usually breaks rather quickly. 
    As a result, this strategy is not effective in trapping distant transactions\footnote{The analysis of why this occurs is left for future work.} (e.g., two transactions whose times of submission are separated by a multiple of the average network latency). 
    
    Another strategy, which is the one we take in this work, is to create a Condorcet cycle by injecting (valid) transactions into the system following a pre-described pattern. 
    This can be done by an adversarial client outside the system, and can be effective even when all the nodes in the system are honest.
    The attack will be more effective in creating cycles and bypassing potential countermeasures if the adversary controls a fraction of nodes in the system.
    
    The immediate damage of imposing a Condorcet cycle, as mentioned earlier, is that it can change the true ordering of honest transactions. 
    In addition to this, the attack may be used to conduct other malicious activities; for instance, the adversary can create a cycle and then with the help of an adversarial leader can try to place its own transaction in desired positions in the final ordering. 

\medskip
\noindent
\textbf{Attack Framework.}
    Let $\mathscr{C}$ be a client controlled by the adversary, and $\mathcal{S}$ be a set of arbitrary but valid transactions created by $\mathscr{C}$.
    Let $\mathcal{P}$ be a partition of the nodes in the system. 
    In its general form, the Condorcet attack is executed in three phases:

\begin{itemize}
    \item Phase 1 (Initialization):
    In this phase, the client $\mathscr{C}$ sends a number of transactions from the set $\mathcal{S}$ to each node in the system.
    The set of transactions sent to a node can be different from that sent to another node.
    More specifically, the client $\mathscr{C}$ assigns a subset $\mathcal{S}_i$ of $\mathcal{S}$ (possibly an empty subset) to each part $\mathcal{P}_i$ in the partition $\mathcal{P}$.
    It then determines an ordering for each subset $\mathcal{S}_i$, and sends the transactions in $\mathcal{S}_i$ to all the nodes in part $\mathcal{P}_i$ with the determined order.

    \item Phase 2 (Pause):
    In the second phase, the attacker waits for a specific amount of time, referred to as the pause time, for the honest transactions to be received by the nodes. 
    The adversary can trap more transactions within a cycle as the pause time increases. 
    However, the pause time should be limited to a single consensus round in the system as the attack should not extend across multiple consensus rounds.
    
    \item Phase 3 (Finalization):
    The third and final phase is the finalization phase, where the attacker completes the Condorcet cycle by sending a new set of transactions to each part in the partition.
    More specifically, the client $\mathscr{C}$ assigns a subset $\mathcal{S}'_i$ of $\mathcal{S}$ (typically a different subset than $\mathcal{S}_i$, used in the initialization phase) to each part $\mathcal{P}_i$ in the partition $\mathcal{P}$.
    It then determines an ordering for each subset $\mathcal{S}'_i$, and sends the transactions in $\mathcal{S}'_i$ to all the nodes in part $\mathcal{P}_i$ with the determined order.    
\end{itemize}

\begin{example}
\label{exm:Cond3nodes}
  Let $\mathcal{P} = \{P_1, P_2, P_3\}$ be a partition of nodes, where $P_1$, $P_2$ and $P_3$ are three parts with almost equal size.
  In this simple example, the adversary $\mathscr{C}$ uses/injects two transactions $\texttt{A}, \texttt{B}$ (i.e., $\mathcal{S} = \{\texttt{A}, \texttt{B}\}$).
  In the initialization phase, $\mathscr{C}$ sends the transaction $\texttt{A}$ and then $\texttt{B}$ to all the nodes in part $P_1$, and sends the transaction $\texttt{B}$ to all the nodes in part $P_2$ (it sends no transactions to the nodes in part $P_3)$.
  Then, after the pause period, $\mathscr{C}$ sends $\texttt{A}$ to all the nodes in part $P_2$, and transaction $\texttt{A}$ and $\texttt{B}$, in that order, to all the nodes in part $P_3$.
  Suppose that during the pause phase, the nodes receive three honest transactions $\texttt{tx}_1$, $\texttt{tx}_2$, and $\texttt{tx}_3$ all the in that order.
  The local ordering of transactions at each node will be then:
  \[
    \begin{array}{cc}
      P_1: & [\texttt{A}, \texttt{B}, \texttt{tx}_1, \texttt{tx}_2, \texttt{tx}_3]\\
      P_2: & [\texttt{B}, \texttt{tx}_1, \texttt{tx}_2, \texttt{tx}_3, \texttt{A}]\\
      P_3: & [\texttt{tx}_1, \texttt{tx}_2, \texttt{tx}_3, \texttt{A}, \texttt{B}]\\      
    \end{array}
  \]
  Note that without the adversarial client $\mathscr{C}$ disturbing the system (i.e., without transactions $\texttt{A}$ and $\texttt{B}$), the system would have had an easy job of ordering the honest transactions as all the nodes in the system have received the honest transactions in the same order, i.e. $[\texttt{tx}_1$, $\texttt{tx}_2$, $\texttt{tx}_3]$.
  Because of the adversary's transactions $\texttt{A}$, $\texttt{B}$, and $\texttt{C}$, however, we have a cycle now as illustrated in Figure~\ref{fig:Cond3nodesFigure}.
  In this figure, an edge from a transaction $\texttt{tx}$ to a transaction $\texttt{tx}'$ indicates that the majority of the nodes have received $\texttt{tx}$ before $\texttt{tx}'$.

    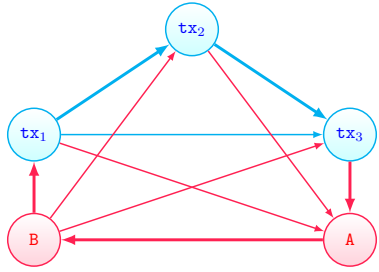
\begin{figure}[h]
        \centering
        \scalebox{0.7}{
            \begin {tikzpicture}[-latex, auto, node distance = 2cm and 3cm, on grid, thick, honest/.style = {circle, top color = white, bottom color = honestColor!20, draw, honestColor, text = blue, minimum width = 1cm}, adversarial/.style = {circle, top color = white, bottom color = adversarialColor!20, draw, adversarialColor, text = red, minimum width = 1cm}]
                \node[honest] (A) {$\texttt{tx}_2$};
                \node[honest] (B) [below left  = of A] {$\texttt{tx}_1$};
                \node[honest] (C) [below right = of A] {$\texttt{tx}_3$};
                \node[adversarial] (D) [below = of B] {\texttt{B}};
                \node[adversarial] (E) [below = of C] {\texttt{A}};
                
                \path (B) [honestColor, ultra thick] edge (A);
                \path (B) [honestColor] edge (C);
                \path (B) [adversarialColor] edge (E);
                \path (D) [adversarialColor, ultra thick] edge (B);
                \path (A) [honestColor, ultra thick] edge (C);
                \path (A) [adversarialColor] edge (E);
                \path (D) [adversarialColor] edge (A);
                \path (C) [adversarialColor, ultra thick] edge (E);
                \path (D) [adversarialColor] edge (C);
                \path (E) [adversarialColor, ultra thick] edge (D);
            \end{tikzpicture}
        }
        \caption{A Condorcet cycle created using two transactions $\texttt{A}$ and $\texttt{B}$}
        \label{fig:Cond3nodesFigure} 
    \end{figure}
\end{example} 

\begin{remark}
    In practice, nodes in the system may receive some honest transactions during the initialization and/or finalization phases. These transactions may or may not get trapped in the Condorcet cycle.
    Based on our simulation results, however, the vast majority of honest transactions during the pause time fall into the Condorcet cycle.
\end{remark}

\begin{remark}
    A potential issue that can impact the success of the Condorcet attack is that the external network may deliver the transactions injected by the adversary out of order.
    For instance, in Example~\ref{exm:Cond3nodes}, the transactions $\texttt{A}$ and $\texttt{B}$ may be received out of order by the nodes in part $P_1$, in which case a cycle does not occur. 
    If the network is prone to packet reordering, then to improve its success, the adversary can execute multiple Condorcet attacks concurrently through what we refer to as \emph{cloning}.     
\end{remark}

\medskip
\noindent
\textbf{Cloning.}
    Packet reordering can happen in a network because of various factors such as network congestion, routing algorithms, and the physical distance between the source and the destination. 
    To conduct a successful Condorcet attack, it is important that nodes receive the injected packets in the order they were transmitted; a deviation from the intended order may result in the failure of the attack.
    
    To increase the success probability of the attack in the presence of network reordering, the adversary can send cloned transactions to the nodes:
    Instead of sending a single transaction $\texttt{A}$, the adversary sends multiple clones of the transaction.
    For instance, in Example~\ref{exm:Cond3nodes}, the adversary can send $\texttt{A}_1$ and $\texttt{A}_2$ instead of $\texttt{A}$, and sends $\texttt{B}_1$ and $\texttt{B}_2$ instead of $\texttt{B}$. Essentially, the adversary interleaves the execution of two Condorcet attacks (for better results, the adversary can interleave several instances of the attack). 
    Then, if the network does not change the order of the transactions, the nodes in parts $P_1$, $P_2$, and $P_3$ will receive transactions as follows:
    \[
        \begin{array}{cc}
            P_1: &  [\texttt{A}_1, \texttt{A}_2, \texttt{B}_1, \texttt{B}_2, \texttt{tx}_1, \texttt{tx}_2, \texttt{tx}_3]\\
            P_2: &  [\texttt{B}_1, \texttt{B}_2,\texttt{tx}_1, \texttt{tx}_2, \texttt{tx}_3, \texttt{A}_1, \texttt{A}_2]\\
            P_3: &  [\texttt{tx}_1, \texttt{tx}_2, \texttt{tx}_3,  \texttt{A}_1, \texttt{A}_2, \texttt{B}_1, \texttt{B}_2]\\      
        \end{array}
    \]
    In Section~\ref{sec:cloning}, we show that cloning can significantly increase the success rate of the Condorcet attack in the presence of network reordering.

\medskip
\noindent
\textbf{Impact on Current Solutions}. 
    The current fair transaction ordering protocols either do not offer a batch-ordering scheme (e.g. \cite{QuickFairness}) or offer a primitive one (e.g. \cite{Aequitas}). 
    For instance, the proposed batch-ordering scheme in Aequitas~\cite{Aequitas} is alphabetical ordering. 
    Therefore, if an adversary creates a Condorcet cycle, as in Example~\ref{exm:Cond3nodes}, the honest transactions will be ordered alphabetically rather than by the time of their arrival.

    Themis~\cite{Themis}, proposes a more thoughtful batch-ordering scheme.
    In this scheme, a Hamiltonian cycle is built and then used to order transactions in the cycle.
    In the best-case scenario, the order of honest transactions in the Hamiltonian cycle is preserved.
    Even in this case, the final ordering of these transactions can change because the Hamiltonian cycle has to be converted into a path by breaking the cycle at one point.
    It is at this point where honest transactions can be divided into two groups.
    The ordering of the honest transactions within each group remains correct, but the ordering of any two transactions from different groups will be incorrect. 
    Therefore, similar to~\cite{QuickFairness} and~\cite{Aequitas}, Themis is vulnerable to the Condorcet attack even if all the nodes (including the leader) in the system are honest.

    To combat the Condorcet attack, a natural approach is to use a strong batch-ordering scheme.
    For instance, in Example~\ref{exm:Cond3nodes}, we can observe that all the nodes report $\texttt{tx}_1$ before $\texttt{tx}_2$, and all the nodes report $\texttt{tx}_2$ before $\texttt{tx}_3$, whereas only two third of the nodes report $\texttt{A}$ before $\texttt{B}$.
    Therefore, in Themis, when it comes to breaking a link in the Hamiltonian cycle, one may choose the weakest link.
    In Example~\ref{exm:Cond3nodes}, the weakest link is between adversarial transactions, and breaking it does not change the true ordering of the honest transactions.
    This solution works for the scenario described in Example~\ref{exm:Cond3nodes}.
    However, as will be explained later, it may not work in other settings, for example when the adversary controls a faction of nodes in the system.

\section{Mitigation}
    Despite its simplicity, it is not straightforward to completely defeat the Condorcet attack.
    In the following, we present three mitigation techniques based on three different approaches to hinder an adversary from successfully executing the attack. 
    We elaborate on the strength of each technique and confirm it through simulations later in Section~\ref{sec:simulation}. 
    We also explain the limitation of each technique, i.e. under what settings/assumptions the technique may not be effective.

    An interesting feature of the proposed mitigation methods is that they do not conflict with each other, thus in practice, they can be applied together for the maximum defense against the attack.
    Another interesting feature of the proposed mitigation methods is that they can be easily applied to Themis, which is currently the strongest fair-ordering solution in the literature.
    We elaborate on this when we cover each proposed mitigation. 

\subsection{Ranked Pairs Batch-ordering}
    \label{sec:ranked}
    The approach we take in our first proposed mitigation is to use a strong batch-ordering scheme to order transactions within a batch.
    Formally, a batch-ordering scheme is a method that takes as input a strongly connected (possibly weighted) directed graph $G = (V, E)$, and returns an ordering of the vertices $V$.
    The strongly connected graph represents the transactions that are in a batch/cycle.    
    
    The candidate for our batch-ordering scheme is \emph{ranked pairs}, an electoral system developed by Nicolaus Tideman in 1987~\cite{RankedPairs}.
    Ranked pairs satisfies many natural and well-studied axiomatic properties in social choice theory\footnote{besides Schulze, ranked pairs is the only existing electoral system that satisfies anonymity, Condorcet criterion, resolvability, Pareto optimality, reversal symmetry, monotonicity, and independence of clones~\cite{Schulze11}.} and is resistance to certain manipulations including adding, deleting and changing a fraction of orderings reported by nodes~\cite{Harvard}. 
    In ranked pairs, the ordering is essentially achieved by choosing a maximal subset $E'$ of $E$ in the inputted graph $G = (V, E)$ with high weights such that $G' = (V, E')$ is a DAG.
    The DAG is then used to establish an ordering of the vertices $V$.
    
    More specifically, our ranked pairs batch-ordering scheme takes as input a weighted directed graph $G = (V, E)$.
    Let $E_1 = E$.
    In step $i$, $i \geq 1$, the algorithm selects an edge $(u, v) \in E_i$ with the highest weight\footnote{When there are multiple edges with the highest weight, one can be chosen according to a fixed tie-breaking method.}.
    It then sets the order $u \prec v$, unless this violates the transitivity of the orders decided in previous steps.
    Finally, it sets $E_{i + 1} \leftarrow E_i \backslash \{(v_i, v_j)\}$, and terminates if $E_{i + 1} = \emptyset$.

    % \begin{algorithm}
    % \caption{The Ranked Pairs Algorithm}
    %     \begin{algorithmic}
    %     \label{alg:ranked}
    %     \STATE Let $V$ be the list of transactions;
    %     \STATE Let $E$ be the list of edges;
    %     \STATE Let $G = (V, E)$ be a tournament graph;
    %     \STATE Let $G' = (V, \emptyset)$ be the output graph with no cycles;
    %     \STATE Sort $E$ based on the weights from largest to smallest;
    %     \FOR{edge $(u, v)$ in $E$}
    %         \STATE Add edge $(u, v)$ to $G'$;
    %         \IF{$G'$ is not a DAG}
    %             \STATE Remove $(u, v)$ from $G'$;
    %         \ENDIF
    %     \ENDFOR
    %     \STATE Topological sort the graph;
    %     \end{algorithmic}
    % \end{algorithm}

    We note that the idea in the above batch-ordering scheme is to establish an ordering using the strongest edges in $G$.
    This will be an effective defense against the Condorcet attack if the ordering of the honest transactions has ``strong support'' in the system.
    In a special case where all the nodes are honest, and all support/report the same ordering of honest transactions, the Condorcet attack can be fully prevented as stated in the following theorem.
    
    \begin{prop}
    \label{prp:paris}
        Suppose that the Condorcet attack succeeds in creating a Condorcet cycle. \\
        Let $\texttt{tx}_1, \texttt{tx}_2, \ldots, \texttt{tx}_m$ be the set of honest transactions in the Condorcet cycle.
        Suppose that all the nodes in the system are honest and report $\texttt{tx}_i$ before $\texttt{tx}_j$ for every $1 \leq i < j \leq m$.
        Then the proposed ranked pairs batch-ordering scheme returns the true ordering of the honest transaction, that is it orders $\texttt{tx}_i$ before $\texttt{tx}_j$ for every $1 \leq i < j \leq m$.
    \end{prop}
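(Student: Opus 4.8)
The plan is to exploit the fact that, under the hypotheses, every ``forward'' edge among honest transactions carries the \emph{maximum possible} weight, and that edges of maximum weight can never form a cycle. Concretely, write the weight $w(u,v)$ of an edge as the number of nodes that report $u$ before $v$. By assumption all $n$ nodes report $\texttt{tx}_i$ before $\texttt{tx}_j$ whenever $i < j$, so each such edge $(\texttt{tx}_i,\texttt{tx}_j)$ is present in the dependency graph (a majority certainly supports it), lies in the same SCC that is fed to the batch-ordering scheme, and has weight exactly $n$. Since no edge can have weight larger than $n$, these honest forward edges are among the very first edges processed by ranked pairs.

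The key lemma I would isolate is that the subgraph consisting of \emph{all} weight-$n$ edges is acyclic. Indeed, a directed cycle $u_0 \to u_1 \to \cdots \to u_k \to u_0$ of weight-$n$ edges would mean that every single node reports $u_0$ before $u_1$, $u_1$ before $u_2$, and so on, and finally $u_k$ before $u_0$; by transitivity of each node's local (total) order this forces every node to report $u_0$ before itself, which is impossible. Hence the weight-$n$ edges form a DAG; equivalently, they all belong to the intersection of the nodes' total orders, which is a partial order.

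With this in hand the argument is short. Ranked pairs processes edges in non-increasing order of weight, so when a forward honest edge $(\texttt{tx}_i,\texttt{tx}_j)$ with $i<j$ is processed, every previously locked edge has weight at least $n$, i.e. weight exactly $n$. If setting $\texttt{tx}_i \prec \texttt{tx}_j$ violated transitivity, there would already be a locked path $\texttt{tx}_j \to \cdots \to \texttt{tx}_i$, necessarily consisting of weight-$n$ edges; appending $(\texttt{tx}_i,\texttt{tx}_j)$ would then close a weight-$n$ cycle, contradicting the lemma. Therefore the edge is always accepted, so $\texttt{tx}_i \prec \texttt{tx}_j$ is locked for every $i<j$. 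Since the ordering returned by ranked pairs is a linear extension of its locked edges, it places $\texttt{tx}_i$ before $\texttt{tx}_j$ for all $i<j$, which is exactly the true ordering of the honest transactions.

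The one place that needs care — the main obstacle — is the interaction with adversarial transactions whose edges may also attain the maximal weight $n$ and be locked before some honest forward edge; a priori such edges could create a backward path $\texttt{tx}_j \to \texttt{A} \to \cdots \to \texttt{tx}_i$ and block the honest edge. The acyclicity lemma is precisely what rules this out, since any such backward path would be built from weight-$n$ edges and hence, together with the honest forward edge, would contradict the agreement of all nodes on $\texttt{tx}_i$ before $\texttt{tx}_j$. The remaining points are routine bookkeeping: confirming the honest edges actually appear in the SCC passed to the scheme, and noting that ties among weight-$n$ edges are irrelevant because the entire weight-$n$ set is acyclic regardless of the processing order.
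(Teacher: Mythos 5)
Your proof is correct and takes essentially the same approach as the paper's: both define edge weights as the number of supporting nodes, isolate the set of full-support (weight-$n$) edges, show this set is acyclic because a cycle there would force a cycle in some node's own total order, and conclude that ranked pairs locks all these edges before any other edge can interfere. Your write-up is, if anything, a bit more explicit than the paper's about why an already-locked weight-$n$ adversarial path can never block an honest forward edge, but the underlying argument is identical.
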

    
    \begin{proof}
      Let $G = (V, E)$ be the graph with $V$ representing the transactions in the Condorcet cycle, and the weight of each edge $(u, v) \in E$, represented as $w(u, v)$, be equal to the number of nodes that reported $u$ before $v$.
      Let $u_1, u_2, \ldots, u_m$ be the vertices in $V$ that represent the honest transactions.
      Let $E_f \subseteq E$ be the set of all edges with the full support of the nodes, that is 
      \[
          E_f = \{e \in E | w(e) = n\},
      \]
      where $n$ is the number of nods in the system.
      Since all the nodes in the system have the same view on the ordering of the honest transactions, we get that $(u_i, u_j) \in E_f$ for every $1 \leq i < j \leq m$.
      We note that the sub-graph $G' = (V, E_f)$ of $G$ is cycle free, as otherwise there will be a cycle in the ordering of individual nodes.
      The ranked pairs batch-ordering algorithm first chooses all the edges in $E_f$ before proceeding with other edges in $E$.
      When the algorithm covers all the edges in $E_f$ the true ordering of the honest transactions will be set, and cannot be changed by the remaining steps of the algorithm.      
    \end{proof}

    \medskip
    \noindent
    \textbf{Limitation.}
    Proposition~\ref{prp:paris} considers an ideal scenario where 1) all the nodes are honest, and 2) they all report the honest transaction in the same order.
    If one of the above two conditions does not hold, however, the Condorcet attack may be able to create a cycle (see the following example).

    \begin{example}
        Consider a system with $n = 5$ nodes.
        Let $\texttt{tx}_1, \texttt{tx}_2, \texttt{tx}_3$ be three honest transactions.
        An adversarial client $\mathscr{C}$ can create a Condorcet cycle of the form
        \[
            \begin{array}{cc}
                N_1: &  [\texttt{A}_1, \texttt{A}_2, \texttt{A}_3, \texttt{A}_4,  \texttt{tx}_1, \texttt{tx}_2, \texttt{tx}_3]\\
                N_2: &  [\texttt{A}_2, \texttt{A}_3, \texttt{A}_4,  \texttt{tx}_1, \texttt{tx}_2, \texttt{tx}_3, \texttt{A}_1]\\
                N_3: &  [\texttt{A}_3, \texttt{A}_4,  \texttt{tx}_1, \texttt{tx}_2, \texttt{tx}_3, \texttt{A}_1, \texttt{A}_2]\\  
                N_4: &  [\texttt{A}_4,  \texttt{tx}_1, \texttt{tx}_2, \texttt{tx}_3, \texttt{A}_1, \texttt{A}_2, \texttt{A}_3]\\ 
                N_5: &  [ \texttt{tx}_3, \texttt{tx}_2, \texttt{tx}_1, \texttt{A}_1, \texttt{A}_2, \texttt{A}_3, \texttt{A}_4]\\
            \end{array}
        \]
        where $\texttt{A}_1, \texttt{A}_2, \texttt{A}_3, \texttt{A}_4$ are the transactions submitted by $\mathscr{C}$. Note that all the nodes, except node 5, report the order $[\texttt{tx}_1, \texttt{tx}_2, \texttt{tx}_3]$, while node 5 reports $[\texttt{tx}_3, \texttt{tx}_2, \texttt{tx}_1]$ (Node 5 is either controlled by the adversary or is an honest node who has simply received the transactions in this order).
        If we run the proposed ranked pairs batch-ordering scheme on this cycle, the returned order of honest transactions may be incorrect.
        It is because the edge between any pair of transactions has a weight of 4 in the dependency graph.
        As a result, an edge between two honest transactions such as $\texttt{tx}_1$ and $\texttt{tx}_2$ may be eliminated in the ranked pairs method, which would result in $\texttt{tx}_2$ and $\texttt{tx}_3$ to be ordered before $\texttt{tx}_1$.
    \end{example}

    \begin{remark}
        To use the proposed ranked pairs batch-ordering scheme in Themis, we can simply replace the Hamiltonian-based batch-ordering scheme of Themis with the ranked pairs batch-ordering scheme in the FairFinalize algorithm.
        We remark that the weight information of the dependency graph is available within the FairFinalize algorithm, thus this replacement is possible.
    \end{remark}

\subsection{Post-decryption Resolution}
    In secure causal ordering, as mentioned earlier, transactions are ordered while they are encrypted, and get decrypted only once a total ordering is committed~\cite{ReiterBirman, CachinKPS01}. 
    This prevents an adversary from observing the contents of transactions while they are being ordered, hence eliminating those front-running attacks (e.g. the sandwich attack~\cite{DarkForest}) that must examine the content of transactions.
    
    To mitigate the Condorcet attack, we propose to maintain the above strategy, except we leave the ordering of transactions inside a Condorcet cycle to after they are decrypted.
    Note that after the decryption of these transactions, an adversary cannot impose a change to the ordering as
    1) there is already a consensus on the set of transactions that must be included, thus the adversary cannot add or remove any transaction to the set;
    2) the ordering of the transactions is performed locally at each node using a pre-determined algorithm.
    In other words, it is too late for the adversary to manipulate the ordering of transactions, although the contents of transactions are disclosed.
    
    Once the transactions within a cycle are decrypted, their contents are disclosed, enabling them to be partitioned into independent groups (i.e., transactions inside different groups are independent of each other).
    Each group can then be ordered independent of the others. 
    By implementing this measure, the adversary is unable to manipulate the ordering of honest transactions if the adversary's transactions are independent of honest transactions. 
    This is because the adversary's transactions will not fall within any group that includes honest transactions.
    Note that we still need to order the groups themselves (i.e. which group comes first, which comes second, and so on).
    As transactions across various groups have no effect on one another, the groups can be safely ordered using a pre-determined algorithm such as ranked pairs as described in Section~\ref{sec:ranked}. 
    
    \begin{remark}
        In the Themis protocol, we can apply the above post-decryption resolution method within the FairFinalize algorithm: If transactions $\texttt{A}$ and $\texttt{B}$ are independent, the edge between them in the dependency graph can be safely removed.
    \end{remark}
    
    \textbf{Limitation.} 
        The post-decryption resolution prevents the adversary from manipulating the order of honest transactions if the adversary's transactions are independent of the honest transactions. 
        In certain scenarios, however, the adversary may be able to create dependencies.
        For instance, consider a situation where a popular NFT is dropping in a block currently being formed. 
        Given the high demand, many transactions are transmitted with the intention of acquiring this NFT.
        Recognizing this, the adversary can execute the Condorcet attack by using transactions that fall within the same dependency group as those attempting to acquire the NFT.
        
        Another limitation of the post-decryption resolution is the computational burden it places on the system to identify dependencies between transactions.

\subsection{Broadcast}
    In the Condorcet attack, the adversary follows a well-structured three-phase strategy: in the first phase, the adversary sends a set of transactions, then pauses in the second phase, and then finishes the attack by sending another round of transactions in the third phase.
    The idea behind our third mitigation technique is to disturb/break the above pattern by broadcasting transactions inside the system as soon as they arrive at an honest node.
    Because of the broadcast, the adversary's transactions that were submitted in the first phase will propagate in the system, which can nullify the adversary's target in the third phase since the transactions that the adversary transmits in the third phase have already been received by the nodes (thus their order has already been decided by the nodes).
    
    In Section~\ref{sec:mitigationsim}, we observe that this strategy proves highly effective in mitigating the suggested Condorcet attack. 
    However, it is important to note that this strategy does incur increased communication overhead as a drawback.
    For instance, in Themis, nodes transmit transactions only to the leader as opposed to broadcasting in the network by themselves.
    Therefore, when applied to Themis, the above strategy will increase Themis's communication overhead (although it does not increase Themis's quadratic communication complexity).

    \medskip
    \noindent
    \textbf{Limitation.}
    The main limitation of the above broadcast-based mitigation technique is that it will be ineffective if the adversary has strong control over the internal network.
    For instance, in Themis and Aequitas, it is assumed that the adversary controls all message delivery in the internal network, and can delay messages up to a bound $\Delta$.
    If $\Delta$ is large enough (e.g., if it is larger than the duration of the Condorcet attack) then the adversary can circumvent the proposed mitigation by delaying all the broadcast transactions so they are delivered only after the attack is complete.

\section{Simulation}
\label{sec:simulation}

    To assess the impact of the Condorcet attack, as well as the effectiveness of the proposed mitigation methods, we conduct a series of experiments through simulations. In this section, we present the results of these experiments.
        
    \textbf{Environments.}
    Our simulation encompasses four environments.
    The first environment captures the honest setting, where all the nodes and clients are honest, thereby eliminating the possibility of a Condorcet attack.
    Even in this environment, Condorcet cycles can occur. 
    Therefore, we are interested to know if our proposed ranked pairs batch-order scheme can more effectively order transactions within a cycle than the Hamiltonian-cycle-based scheme used in Themis.
    
    In the second environment, all the nodes in the system are honest, but there is an external adversary, who conducts the Condorcet attack from outside the system.
    In this environment, we are interested to evaluate the success rate and impact of the Condorcet attack (i.e., how many honest transactions the adversary can trap within a cycle).
    
    In the third environment, we introduce packet reordering to the external network.
    We evaluate the impact of this on the success rate of the Condorcet attack. 
    We also observe how the cloning method can help the adversary to improve its success rate.
    
    The last environment that we consider is similar to the second environment, except this time we guard the system using the proposed mitigation methods.
    In this environment, we measure the impact of the Condorcet attack in order to examine the strength of the proposed mitigation methods.

\textbf{Clients.}
  We use a sending process to submit all the clients' transactions to the system.
  The sending process transmits transactions in sequence at discrete times $t_i$, $i \geq 0$.
  At each time instance, the process sends ($n$ copies of) the transaction of a given client to all the $n$ nodes in the system.
  Each copy of the transaction will arrive at its destination node with a random delay drawn independently from a distribution named $\texttt{NetworkDist}$. 
  We refer to this distribution as the network latency. 
  We use another distribution, $\texttt{GenerationDist}$, to determine the delay between two consecutive time instances (i.e. $t_{i + 1} - t_i$ follows the $\texttt{GenerationDist}$ distribution).
  Similar to~\cite{Themis}, we set both $\texttt{GenerationDist}$ and $\texttt{NetworkDist}$ to exponential distributions with means of one and $r$, respectively.
  We refer to $r$ as \textit{external network ratio}.
  One can think of $r$ as the expected number of clients who transmit transactions within a time frame equal to the average network latency.

\subsection{Honest Environment}
    In this environment, all the nodes and clients are honest, and consequently, there is no Condorcet attack.
    Nevertheless, as shown in Figure~\ref{fig:honestEnv}, Condorcet cycles can occur particularly when the external network ratio is greater than one.

    To obtain the results plotted in Figure~\ref{fig:honestEnv}, we varied the external network ratio from 0.01 to 1000.
    For each given network ratio, and each network size of $n = 21$ and $n = 101$, we conducted 100 simulation runs.
    In each run, the sending process transmitted 100 transactions (at 100-time instances drawn from the $\texttt{GenerationDist}$ distribution).
    Once every node received all the transmitted transactions, we proceeded to generate the dependency graph using the Themis algorithm.
    By examining the graph (i.e. extracting strongly connected components) we then identified all the Condorcet cycles. 

    An interesting observation from Figure~\ref{fig:honestEnv} is that when the external network ratio is less than about one, Condorcet cycles rarely occur.
    As the external network ratio becomes larger than one, however, Condorcet cycles start to appear.
    For high values of the external network ratio, as depicted in Figure~\ref{fig:honestEnv}, Condorcet cycles not only occur frequently, but also include many of the transmitted transactions.
    Overall, this observation suggests a critical threshold at which the system's behavior, with respect to creating Condorcet cycles, significantly changes.

    % \begin{figure}
    %  \centering
    %  \begin{subfigure}[b]{0.48\textwidth}
    %      \centering
    %      \includegraphics[width=\textwidth]{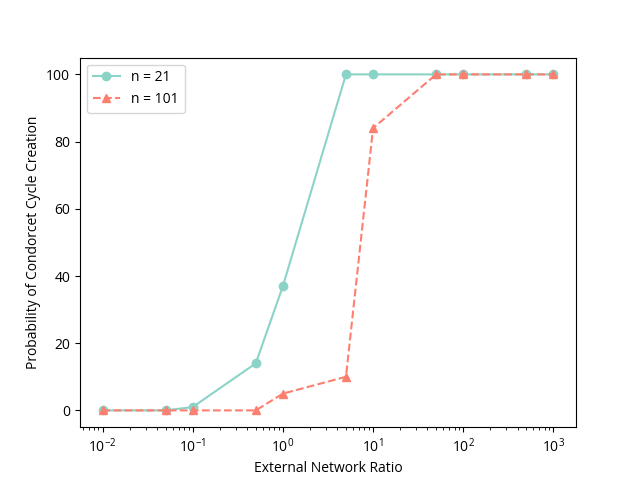}
    %      \caption{The fraction of runs with a cycle}
    %      \label{fig:Cprob}
    %  \end{subfigure}
    %  \hfill
    %  \begin{subfigure}[b]{0.48\textwidth}
    %      \centering
    %      \includegraphics[width=\textwidth]{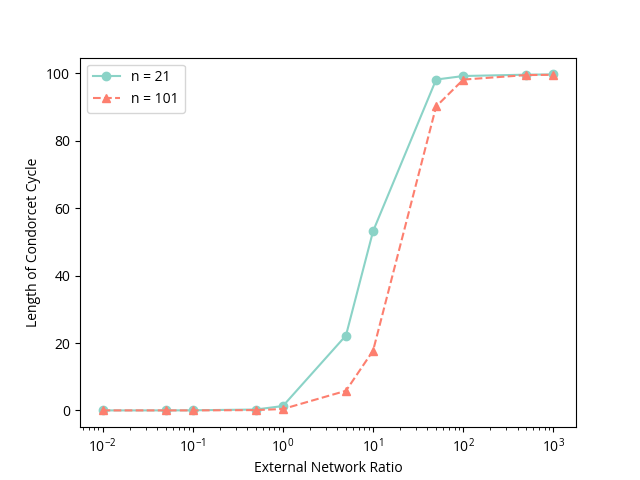}
    %      \caption{Number of transactions in cycles}
    %      \label{fig:Clength}
    %  \end{subfigure}
    %     \caption{Condorcet cycles in the honest environment}
    %     \label{fig:honestEnv}
    % \end{figure}

    \begin{figure*}


    \centering
    \begin{subfigure}[b]{.48\textwidth}
        \centering
         \includegraphics[width=\textwidth]{images/cycle-prob-21-nodes.png}
         \subcaption{The chance of a Condorcet cycle}
         \label{fig:Cprob}
    \end{subfigure}
    \hfill
    \begin{subfigure}[b]{.48\textwidth}
        \centering
         \includegraphics[width=\textwidth]{images/cycle-21-nodes.png}
         \subcaption{Number of transactions in cycles}
         \label{fig:Clength}
    \end{subfigure}
        \caption{Condorcet cycles in the honest environment
        }
        \label{fig:honestEnv}
    \end{figure*}

    We refer to Condorcet cycles that are not created by an adversary as \emph{natural Condorcet cycles}.
    Conversely, we call a Condorcet cycle adversarial if it is created by an adversary.
    In Section~\ref{sec:ranked}, we proposed a ranked pairs batch-ordering scheme to handle the ordering of transactions within an adversarial Condorcet cycle.
    Later in this section, we demonstrate that the proposed scheme indeed alleviates the severity of the Condorcet attack. 
    Here, we show (Figure~\ref{fig:fracHonest}) that the proposed ranked pairs batch-ordering scheme is also a good candidate for ordering transactions within a natural Condorcet cycle.
    Consequently, even in an honest environment, we can improve fairness in ordering transactions by replacing the existing batch-ordering schemes (i.e., the alphabetical scheme, and the Hamiltonian-based scheme of Themis) with the proposed ranked pairs batch-ordering scheme. 

    In Figure~\ref{fig:fracHonest}, the external network ratio (the $x$-axis) ranges from 1 to 1000; this is the range in which Condorcet cycles naturally occur.
    The $y$-axis shows the fraction of transaction pairs that are ordered correctly according to their transmission time.
    Each data point in Figure~\ref{fig:fracHonest} is the average of values obtained over 100 simulation runs.
    The data presented in this figure demonstrate the superiority of the proposed ranked pairs batch-ordering scheme for two network sizes of $n = 21$ and $n = 101$.

    % \begin{figure}[h]
    %  \centering
    %  \begin{subfigure}[b]{0.48\textwidth}
    %      \centering
    %      \includegraphics[width=\textwidth]{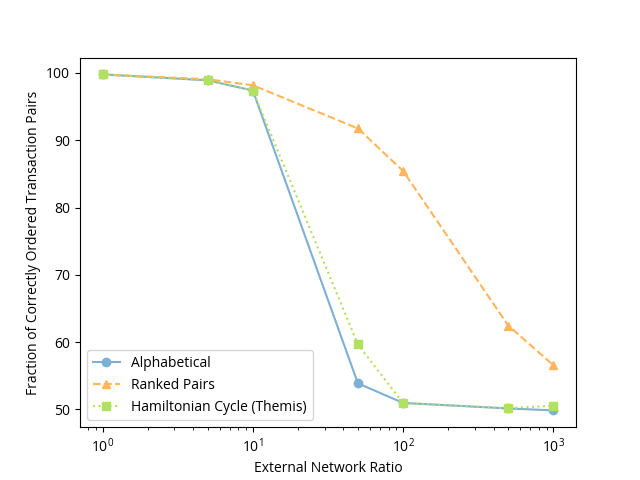}
    %      \caption{The number of nodes is $n = 21$}
    %      \label{fig:n21Honest}
    %  \end{subfigure}
    %  \hfill
    %  \begin{subfigure}[b]{0.48\textwidth}
    %      \centering
    %      \includegraphics[width=\textwidth]{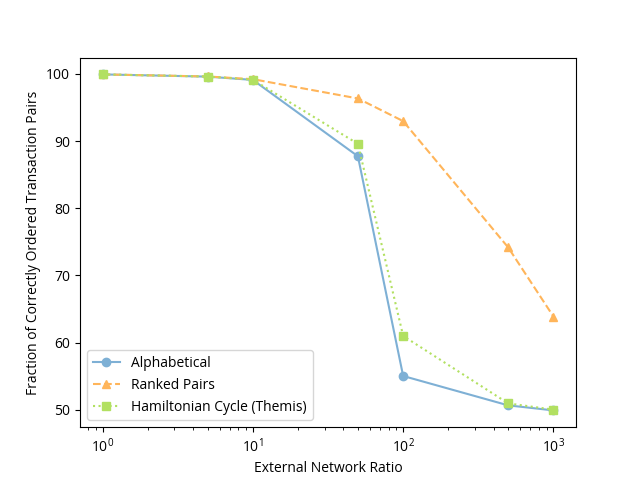}
    %      \caption{The number of nodes is $n = 101$}
    %      \label{fig:n101Honest}
    %  \end{subfigure}
    %     \caption{Fraction of correctly ordered transactions in the honest environment}
    %     \label{fig:fracHonest}
    % \end{figure}    

    \begin{figure*}[h]


    \centering
    \begin{subfigure}[b]{.48\textwidth}
        \centering
         \includegraphics[width=\textwidth]{images/ideal-21-nodes.png}
         \subcaption{The number of nodes is $n = 21$}
         \label{fig:n21Honest}
    \end{subfigure}
    \hfill
    \begin{subfigure}[b]{.48\textwidth}
        \centering
         \includegraphics[width=\textwidth]{images/ideal-101-nodes.png}
         \subcaption{The number of nodes is $n = 101$}
         \label{fig:n101Honest}
    \end{subfigure}
        \caption{Fraction of correctly ordered transactions in the honest environment}
        \label{fig:fracHonest}
    \end{figure*}

\subsection{Adversarial Environment}
    In the existing adversarial environments in the literature, there is often at least one (typically up to $f = \theta(n)$) adversarial node in the system.
    In our adversarial environment, in contrast, all the nodes in the system can be honest.
    There is, however, an adversarial client in our environment who orchestrates the Condorcet attack from outside the system.
    
    In this section, we evaluate the performance of the Condorcet attack in this environment.
    In particular, we measure the success rate of the attack in the number of honest transactions it can trap within a cycle.
    The measurement is carried out for external network ratios $r$ less than one, as natural Condorcet cycles are rare in this regime, particularly when $r \ll 1$.
    This allows us to assess the strength of the attack in creating cycles in a setting where Condorcet cycles do not naturally happen.

    In our simulation, we simply use two adversarial transactions to create the Condorcet cycle as described in Example~\ref{exm:Cond3nodes}. 
    We set the pause time of the Condorcet attack to $\tau \in \{10, 50\}$ times the mean of the $\texttt{GenerationDist}$ distribution.
    This means that, on average, $\tau$ honest transactions are transmitted to the system during the pause time.    
    
    % During both simulations, we transmit only 20 \textcolor{blue}{and 100} honest transactions, as increasing the number of transactions without extending the pause time does not impact the performance. 

    In parallel to the transmissions of honest transactions, the two adversarial transactions are transmitted to create a Condorcet cycle.
    Once all transactions are received by the nodes, we calculate two separate dependency graphs:
    one considering the adversarial transactions, and one ignoring them.
    By comparing these two dependency graphs, we then assess the impact of the attack on the final ordering.    

    Figures~\ref{fig:pefromrAdvTau10} and \ref{fig:pefromrAdvTau50} show the average number of the honest transactions that the attack can trap within cycles over two different settings: $\tau = 10$ and $\tau = 50$.
    As shown, for a wide range of external network ratios, the attack can trap nearly all the honest transactions that are transmitted during the pause time (about 9 honest transactions in the setting $\tau = 10$, and nearly 49 honest transactions in the setting $\tau = 50$). 
    This demonstrates the strength of the attack, considering that, on average $\tau$ honest transactions are submitted to the system during the pause time (and the attack traps nearly all of them).

    % \begin{figure}[h]
    %  \centering
    %  \begin{subfigure}[b]{0.4\textwidth}
    %      \centering
    %      \includegraphics[width=\textwidth]{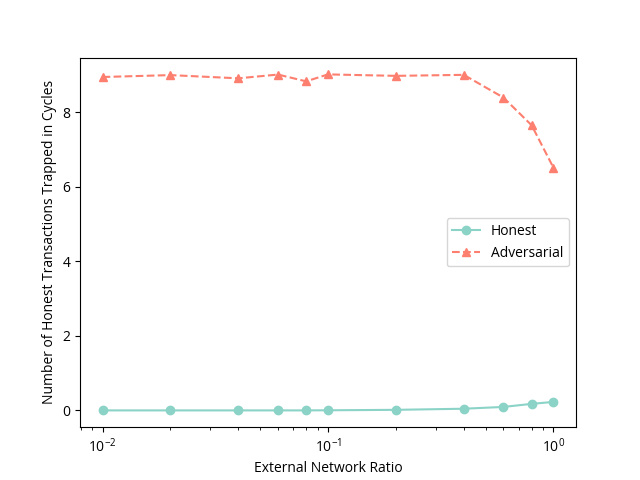}
    %      \caption{$\tau=10$, $n = 21$}
    %      \label{fig:n21tx20Adv}
    %  \end{subfigure}
    %  \hfill
    %  \begin{subfigure}[b]{0.4\textwidth}
    %      \centering
    %      \includegraphics[width=\textwidth]{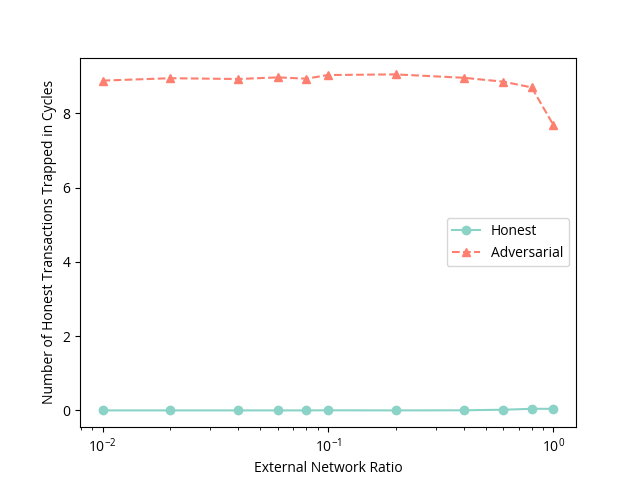}
    %      \subcaption{$\tau=10$, $n = 101$}
    %      \label{fig:n101tx20Adv}
    %  \end{subfigure}
    %     \caption{Number of honest transactions trapped in Condorcet cycles for $\tau=10$.}
    %     \label{fig:pefromrAdvTau10}
    % \end{figure}

    \begin{figure*}[h]
    \centering
    \begin{subfigure}[b]{.48\textwidth}
        \centering
         \includegraphics[width=\textwidth]{}
         \subcaption{$\tau=10$, $n = 21$}
         \label{fig:n21tx20Adv}
    \end{subfigure}
    \hfill
    \begin{subfigure}[b]{.48\textwidth}
        \centering
         \includegraphics[width=\textwidth]{}
         \subcaption{$\tau=10$, $n = 101$}
         \label{fig:n101tx20Adv}
    \end{subfigure}
        \caption{Number of honest transactions trapped in Condorcet cycles for $\tau=10$.}
        \label{fig:pefromrAdvTau10}
    \end{figure*}

    % \begin{figure}[h]
    %  \centering
    %  \begin{subfigure}[b]{0.4\textwidth}
    %      \centering
    %      \includegraphics[width=\textwidth]{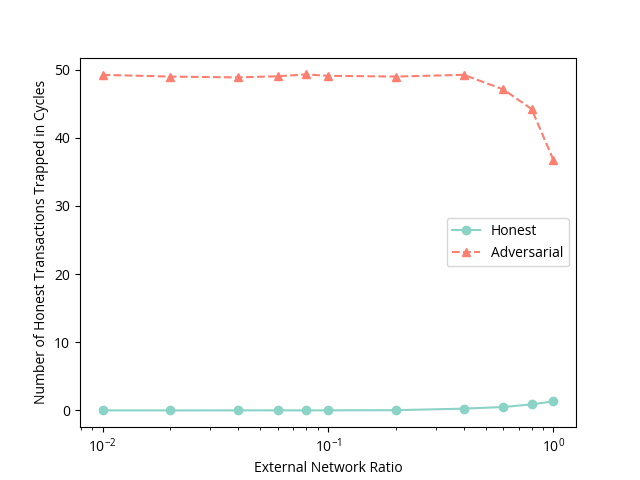}
    %      \caption{$\tau=50$, $n = 21$}
    %      \label{fig:n21tx100Adv}
    %  \end{subfigure}
    %  \hfill
    %  \begin{subfigure}[b]{0.4\textwidth}
    %      \centering
    %      \includegraphics[width=\textwidth]{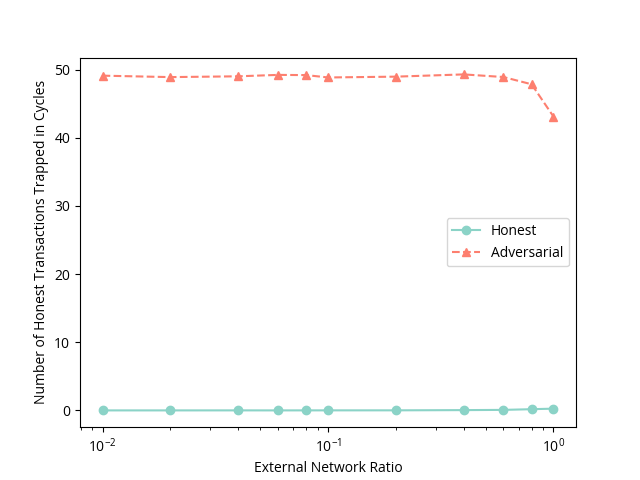}
    %      \caption{$\tau=50$, $n = 101$}
    %      \label{fig:n21tx100Adv}
    %  \end{subfigure}
    %     \caption{Number of honest transactions trapped in Condorcet cycles for $\tau=50$.}
    %     \label{fig:pefromrAdvTau50}
    % \end{figure}

    \begin{figure*}[h]
    \centering
    \begin{subfigure}[b]{.48\textwidth}
        \centering
         \includegraphics[width=\textwidth]{}
         \subcaption{$\tau=50$, $n = 21$}
         \label{fig:t50n21}
    \end{subfigure}
    \hfill
    \begin{subfigure}[b]{.48\textwidth}
        \centering
         \includegraphics[width=\textwidth]{}
         \subcaption{$\tau=50$, $n = 101$}
         \label{fig:nt50n101}
    \end{subfigure}
        \caption{Number of honest transactions trapped in Condorcet cycles for $\tau=50$.}
        \label{fig:pefromrAdvTau50}
    \end{figure*}
    
\subsection{Network Reordering}
\label{sec:cloning}
  In the Condorcet attack, the adversary sends a sequence of transactions in a particular order to create a cycle.
  The external network may, however, change the order of transactions transmitted, which can, in turn, reduce the attack's success rate.
  To evaluate this, we performed simulations over a network which changes the order of two consecutively transmitted transactions with probability $0 \leq p \leq 0.5$.
  For each value of $p$, we performed 1000 runs of simulations.
  The success rate of the attack was set to the fraction of runs in which the attack successfully trapped the honest transactions in a Condorcet cycle.
  
  Using the above setting, we conducted two instances of the Condorcet attack.
  The first instance uses two adversarial transactions $\texttt{A}$ and $\texttt{B}$ as in Example~\ref{exm:Cond3nodes}, and takes the following pattern:

  \[
    \begin{array}{cc}
      P_1: &  \texttt{A}, \texttt{B}, \text{Pause}\\
      P_2: &  \texttt{B}, \text{Pause}, \texttt{A}\\
      P_3: &  \text{Pause}, \texttt{A}, \texttt{B}\\      
    \end{array}
  \]

  \noindent
  As illustrated in Figure~\ref{fig:reorder}, this instance is sensitive to network reordering (the success rate of the attack drops quickly with $p$).
  As shown in the figure, the attack's success rate increases when we use the second instance of cloning described below.
  
  In our second instance (denote as $\texttt{tx} = 4$ in Figure~\ref{fig:reorder}), the adversary partitions nodes into four parts $P_1, P_2, P_3$ and $P_4$, and uses four transactions ($\texttt{A}$, $\texttt{B}$, $\texttt{C}$ and $\texttt{D}$) instead of two, in the following pattern:
  
  \[
    \begin{array}{cc}
      P_1: &  \texttt{A}, \texttt{B},\, \text{Pause} \,, \texttt{C}, \texttt{D}\\ 
      P_2: &  \texttt{B}, \texttt{C},\, \text{Pause} \,, \texttt{D}, \texttt{A}\\ 
      P_3: &  \texttt{C}, \texttt{D}, \, \text{Pause} \,, \texttt{A}, \texttt{B}\\  
      P_4: &  \texttt{D}, \texttt{A}, \, \text{Pause} \,, \texttt{B}, \texttt{C}\\              
    \end{array}
  \]  
  
  This instance of the Condorcet attack is more robust against network reordering as demonstrated in Figure~\ref{fig:reorder}.
  As in the first instance, the success rate of the instance can be boosted using the cloning method.
  In particular, note that the second instance together with a single clone is almost fully resistant to network transaction reordering.

    % \begin{figure}[h]
    %  \centering
    %  \begin{subfigure}[b]{0.48\textwidth}
    %      \centering
    %      \includegraphics[width=\textwidth]{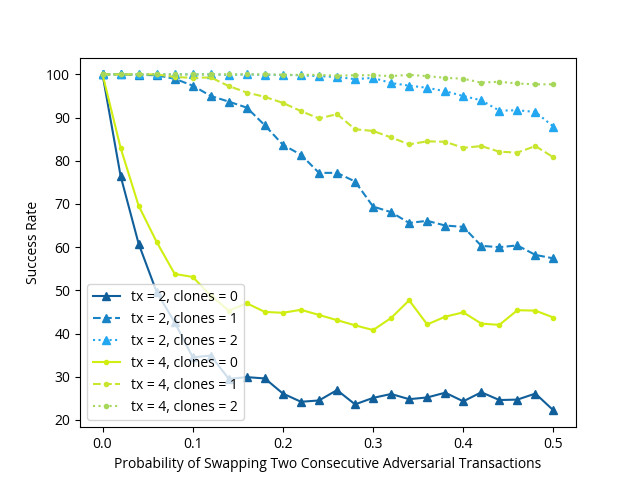}
    %      \caption{The number of nodes is $n = 21$}
    %      \label{fig:n21reorder}
    %  \end{subfigure}
    %  \hfill
    %  \begin{subfigure}[b]{0.48\textwidth}
    %      \centering
    %      \includegraphics[width=\textwidth]{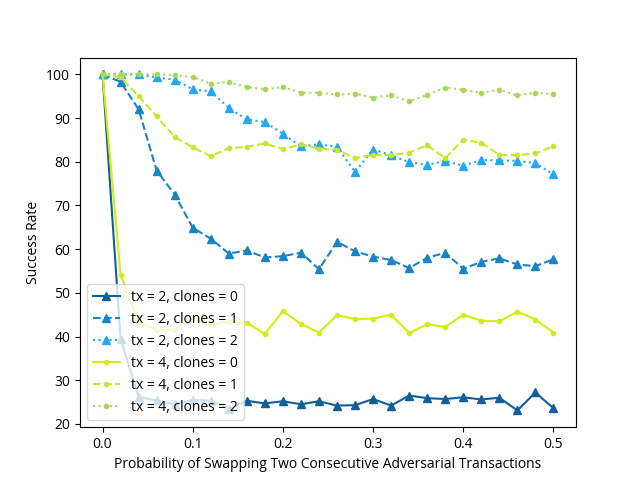}
    %      \caption{The number of nodes is $n = 101$}
    %      \label{fig:n101reorder}
    %  \end{subfigure}
    %     \caption{Impact of network reordering on the success of the Condorcet attack.}
    %     \label{fig:reorder}
    % \end{figure}

    \begin{figure*}[h]


    \centering
    \begin{subfigure}[b]{.48\textwidth}
        \centering
         \includegraphics[width=\textwidth]{images/cloned-21-nodes.png}
         \subcaption{The number of nodes is $n = 21$}
         \label{fig:n21reorder}
    \end{subfigure}
    \hfill
    \begin{subfigure}[b]{.48\textwidth}
        \centering
         \includegraphics[width=\textwidth]{images/cloned-101-nodes.png}
         \subcaption{The number of nodes is $n = 101$}
         \label{fig:n101reorder}
    \end{subfigure}
        \caption{Impact of network reordering on the success of the Condorcet attack.}
        \label{fig:reorder}
    \end{figure*}

\subsection{A Non-Injective Condorcet Attack}
    Injecting transactions into the system is a key component of the proposed Condorcet attack.
    Without this component, an adversary has limited power in creating cycles even when the adversary controls the leader and a faction of all the nodes in the system.    

    To illustrate the above point, we conducted simulations over two networks with sizes: $n = 21$ and $n = 101$.
    In our simulation, the adversary controls the maximum fraction of nodes, including the leader, allowed by Themis (a quarter of nodes minus one). 
    All these nodes report the order of their received transactions in reverse, in a strategy to create Condorcet cycles\footnote{We note that this may not be an optimum strategy to create Condorcet cycles. Nevertheless, we believe that an optimum strategy (which may be computationally intractable) may not be significantly more successful than the adopted strategy. We leave the validation of this claim for future work.}.
    The external network ratio is varied from 0.01 to 100 to capture a wide range of network conditions.
    The total number of transmitted transactions is set to 100.     

    To evaluate the impact of the above strategy in creating cycles, we created two dependency graphs.
    The first graph represents the scenario where the adversarial nodes reverse their orderings, whereas the second graph represents the scenario where the adversarial nodes report the true ordering.
    Figure~\ref{fig:Internal} shows the results of our simulation.

    As shown in Figure~\ref{fig:Internal}, the adversary's attempts to create cycles are largely unsuccessful in the region where the external network ratio is less than one.
    We note that in this region, the average temporal gap between two different transaction transmissions is more than the average network latency.
    In particular, when $r \ll 1$ (i.e., when transactions are transmitted far apart in time with respect to the network latency), honest nodes in the system have a clear view of the true ordering of transactions.
    In this region, the adversary is all but powerless in creating cycles\footnote{When $r > 1$ (i.e., in the region where Condorcet cycles naturally emerge) the adversary achieves some degree of success in creating larger cycles than naturally occur.}, as evident in Figure~\ref{fig:Internal}.
    In contrast, in the same region, an external adversary can create a cycle using the proposed Condorcet attack, even when all the nodes in the system are honest.

    % \begin{figure}[h]
    %  \centering
    %  \begin{subfigure}[b]{0.48\textwidth}
    %      \centering
    %      \includegraphics[width=\textwidth]{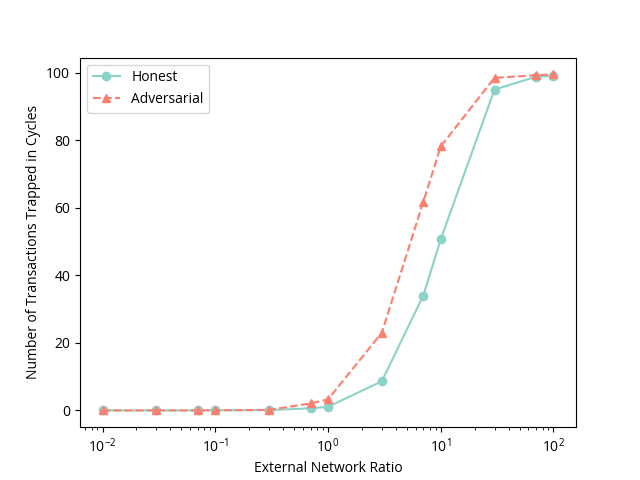}
    %      \caption{The number of nodes is $n=21$}
    %      \label{fig:n21Internal}
    %  \end{subfigure}
    %  \hfill
    %  \begin{subfigure}[b]{0.48\textwidth}
    %      \centering
    %      \includegraphics[width=\textwidth]{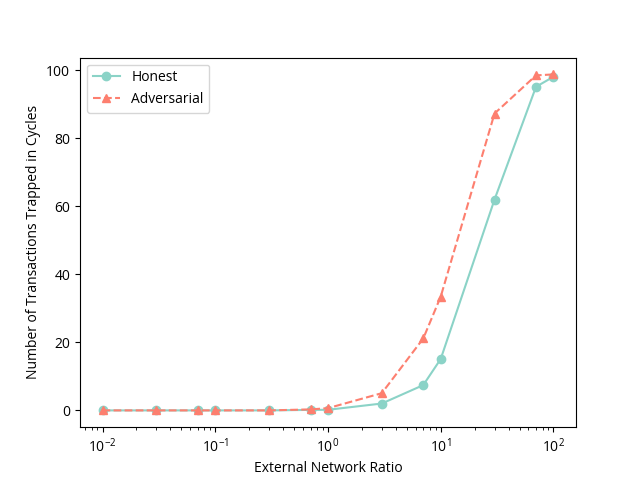}
    %      \caption{The number of nodes is $n=101$}
    %      \label{fig:n101Internal}
    %  \end{subfigure}
    %     \caption{The non-injective attack has limited power in creating cycles.}     
    %     \label{fig:Internal}
    % \end{figure} 

    \begin{figure*}[h]


    \centering
    \begin{subfigure}[b]{.48\textwidth}
        \centering
         \includegraphics[width=\textwidth]{images/internal-21-nodes.png}
         \subcaption{The number of nodes is $n=21$}
         \label{fig:n21Internal}
    \end{subfigure}
    \hfill
    \begin{minipage}[b]{.48\textwidth}
        \centering
         \includegraphics[width=\textwidth]{images/internal-101-nodes.png}
         \subcaption{The number of nodes is $n=101$}
         \label{fig:n101Internal}
    \end{minipage}
        \caption{The non-injective attack has limited power in creating cycles.}     
        \label{fig:Internal}
    \end{figure*}

\subsection{Mitigation}
    \label{sec:mitigationsim}
    In this section, we evaluate the performance of our mitigation methods in preventing or minimizing the impact of the Condorcet attack. 

    \textbf{Ranked-pairs-based Mitigation Method.}
    To evaluate the effectiveness of this mitigation, we conducted a simulation over two network sizes of $n = 21$ and $n = 101$.
    We set the pause time of the attack to 10 times the mean of $\texttt{GenerationDist}$, and set the total number of honest transactions to~20.
    We varied the external network ratio $r$ from 0.001 to 1.
    Recall that in this range of external network ratio (i.e., $r < 1$), Condorcet cycles do not emerge naturally; rather they are created by the Condorcet attack.
    To evaluate the true impact of our ranked-pars mitigation method, therefore, we focused on this region.    
    
    Figure~\ref{fig:Ranked} compares the performance of our proposed ranked-pairs-based mitigation method to the Hamiltonian-based method used in Themis, and the simple alphabetical method.
    The results show that the proposed ranked-pairs method achieves a low error rate, indicating that it can effectively order honest transactions correctly even when they fall in a Condorcet cycle.
    In contrast, the Themis algorithm exhibits an error rate of approximately $31\%$.
    The error rate in the case of alphabetical ordering is $50\%$.
    Note that a random ordering method can, on average, correctly orders $50\%$ of all the pairs of transactions.
    In this sense, the worst-case transaction ordering error is $50\%$, which is the case for the alphabetical method (this method is essentially a random ordering method).

    % \begin{figure}[h]
    %  \centering
    %  \begin{subfigure}[b]{0.48\textwidth}
    %      \centering
    %      \includegraphics[width=\textwidth]{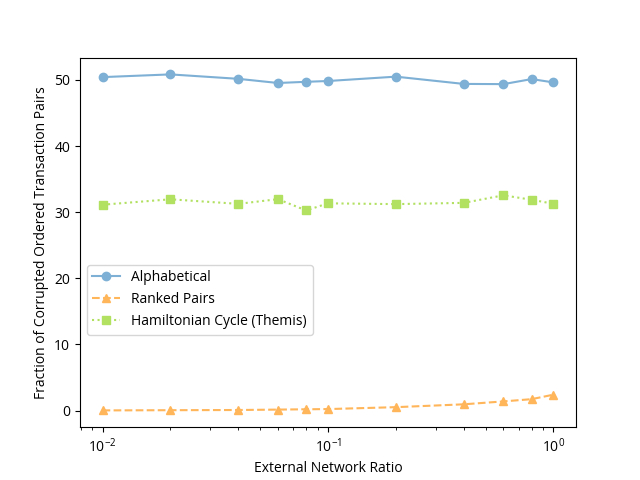}
    %      \caption{The number of nodes is $n=21$}
    %      \label{fig:n21Ranked}
    %  \end{subfigure}
    %  \hfill
    %  \begin{subfigure}[b]{0.48\textwidth}
    %      \centering
    %      \includegraphics[width=\textwidth]{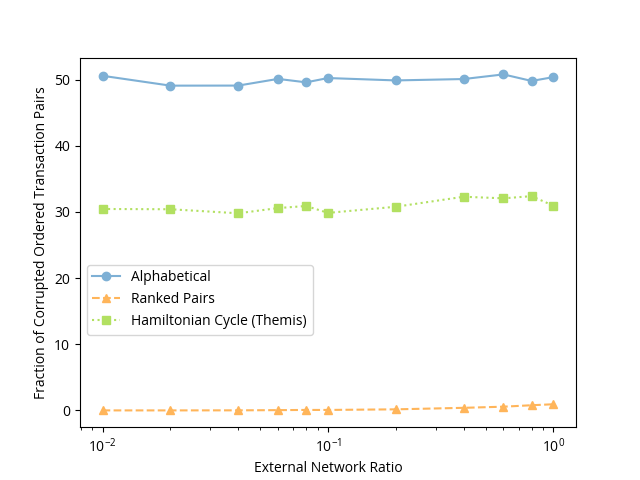}
    %      \caption{The number of nodes is $n=101$}
    %      \label{fig:n101Ranked}
    %  \end{subfigure}
    %     \caption{The performance of the proposed ranked pairs based mitigation method
    %     \textcolor{red}{please use square, triangle and circle for all the figures}
    %     \textcolor{blue}{DONE}}
    %     \label{fig:Ranked}
    % \end{figure}     

    \begin{figure*}[h]


    \centering
    \begin{subfigure}[b]{.48\textwidth}
        \centering
         \includegraphics[width=\textwidth]{images/mitigation-21-nodes.png}
         \subcaption{The number of nodes is $n=21$}
         \label{fig:n21Ranked}
    \end{subfigure}
    \hfill
    \begin{subfigure}[b]{.48\textwidth}
        \centering
         \includegraphics[width=\textwidth]{images/mitigation-101-nodes.png}
         \subcaption{The number of nodes is $n=101$}
         \label{fig:n101Ranked}
    \end{subfigure}
        \caption{The performance of the proposed ranked pairs based mitigation method}
        \label{fig:Ranked}
    \end{figure*}

    \textbf{The Broadcast-based Mitigation Method.}
    To evaluate the effectiveness of the broadcast-based mitigation method, we conducted simulations using two network sizes: $n = 21$ and $n = 101$.
    We introduced a new exponential distribution called $\texttt{InternalNetworkDist}$, which captures the random delays experienced by messages within the internal network.
    Specifically, we sample from $\texttt{InternalNetworkDist}$ to determine the delay between sending a transaction from one node to another node.
    This is in contrast to $\texttt{NetworkDist}$, which is used to determine the random delays between a client and a node in the external network.

    In our simulation, we set the mean of $\texttt{InternalNetworkDist}$ to $r'$.
    We refer to $r'$ as the \emph{internal network ratio}. 
    In our simulations, we set $\tau$ to 10 times the mean of $\texttt{GenerationDist}$ (i.e. $\tau = 10 \cdot r)$, and set the total number of honest transactions to 20. 
    We fixed the external network ratio to $r = 0.1$, to ensure that no natural Condorcet cycles were created, and varied the internal network ratio $r'$ from 0.01 to 1000.
    
    We analyzed the number of honest transactions trapped in a Condorcet cycle under three different settings.
    In the first setting, referred to as the ``honest setting'', nodes did not broadcast and the adversary did not conduct a Condorcet attack.
    In the second setting, nodes still did not broadcast, but the adversary attempted a Condorcet attack.
    Finally, in the last setting, the adversary launched an attack while the nodes employed the broadcasting method to mitigate it.

    Figure~\ref{fig:Broadcast} shows the result of our simulations in the above three settings. 
    The results demonstrate that the proposed broadcast-based mitigation is highly effective in preventing the adversary from creating a Condorcet cycle and trapping honest transactions.
    This can be attributed to two key factors:    
    Firstly, the mitigation strategy disrupts the completion of the pause phase, thereby preventing honest transactions from being trapped in a Condorcet cycle. 
    When the internal network ratio $r'$ is smaller than the pause time, almost no transactions are trapped.
    Interestingly, even when $r'$ exceeds the pause time, the adversary cannot achieve the same level of performance. 
    It is because the broadcast of transactions with the internal network can still somewhat disturb the ordering of adversarial transactions.
    This reduces the success rate of the attack as the specific ordering of adversarial transactions is crucial for creating a Condorcet cycle. 
    If, on the other hand, the adversary has enough control over the internal network to delay transactions as much as the pause time, it can circumvent the proposed broadcast-based mitigation as the adversary can enforce the ordering of its transactions within the internal network by delaying all the messages.
    
    % \begin{figure}[h]
    %  \centering
    %  \begin{subfigure}[b]{0.48\textwidth}
    %      \centering
    %      \includegraphics[width=\textwidth]{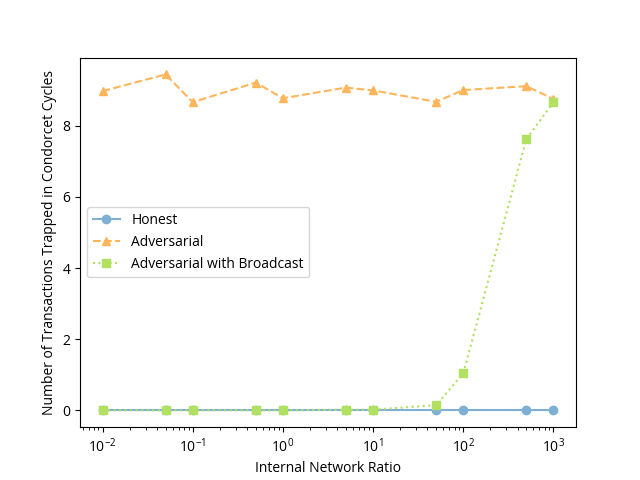}
    %      \caption{The number of nodes is $n=21$}
    %      \label{fig:qquad}
    %  \end{subfigure}
    %  \hfill
    %  \begin{subfigure}[b]{0.48\textwidth}
    %      \centering
    %      \includegraphics[width=\textwidth]{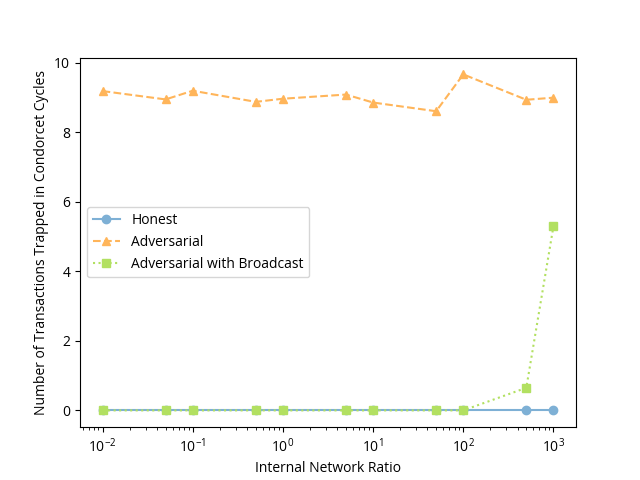}
    %      \caption{The number of nodes is $n=101$}
    %      \label{fig:n101Broadcast}
    %  \end{subfigure}
    %     \caption{The performance of the proposed broadcast mitigation method}
    %     \label{fig:Broadcast}
    % \end{figure}

    \begin{figure*}[h]


    \centering
    \begin{subfigure}[b]{.48\textwidth}
        \centering
         \includegraphics[width=\textwidth]{images/broadcast-21-nodes.png}
         \subcaption{The number of nodes is $n=21$}
         \label{fig:n21Broadcast}
    \end{subfigure}
    \hfill
    \begin{subfigure}[b]{.48\textwidth}
        \centering
         \includegraphics[width=\textwidth]{images/broadcast-101-nodes.png}
         \subcaption{The number of nodes is $n=101$}
         \label{fig:n101Broadcast}
    \end{subfigure}
        \caption{The performance of the proposed broadcast mitigation method}
        \label{fig:Broadcast}
    \end{figure*}

\pagebreak
\section{Conclusion}
    Condorcet cycles can occur naturally. 
    While these natural cycles may not significantly disrupt fairness in the system since transactions falling within these cycles are typically received around the same time, the artificial creation of Condorcet cycles can lead to significant unfairness in the system. 
    In this paper, we showed that even with all nodes in the system behaving honestly, it is relatively simple to generate such artificial cycles. 
    Furthermore, we demonstrated that these created cycles possess significant power, as they can trap transactions submitted at widely different times that would not naturally fall within a cycle.
    
    To address this attack, we proposed three mitigation methods using different approaches. 
    These methods complement one another and can be employed collectively to fortify the defensive measures against the attack. 
    Through simulations, we showcased that despite their described limitations, the proposed mitigation methods can substantially reduce the adverse impact of the Condorcet attack.

% \section{Future work}
% \begin{itemize}
%     \item Hide metadata using sync flood defense idea (deli counter).
%     \item ``Internal Condorcet attack'', where the leader creates cycles (without injecting transactions). The idea is to create a long Condorcet chain, which is probably possible when the network ratio [Themis] (represents how quickly new transactions are created compared to their network propagation time) is high.
%     The cycles can create a chain, but the chain breaks quickly (we can analyze this using a simple model)

% \end{itemize}

\noindent
\bibliography{sample}
\end{document}